
\documentclass[sigconf,dvipsnames]{acmart}

\settopmatter{printacmref=true}

\fancyhead{}

\usepackage{balance}

\def\BibTeX{{\rm B\kern-.05em{\sc i\kern-.025em b}\kern-.08emT\kern-.1667em\lower.7ex\hbox{E}\kern-.125emX}}

%

\copyrightyear{2020}
\acmYear{2020}
\setcopyright{acmcopyright}
\acmConference[WSDM '20]{The Thirteenth ACM International Conference on Web Search and Data Mining}{February 3--7, 2020}{Houston, TX, USA}
\acmBooktitle{The Thirteenth ACM International Conference on Web Search and Data Mining (WSDM '20), February 3--7, 2020, Houston, TX, USA}
\acmPrice{15.00}
\acmDOI{10.1145/3336191.3371813}
\acmISBN{978-1-4503-6822-3/20/02}



\usepackage{bm,bbm}
\usepackage{amsmath,amsthm,amssymb}
\usepackage{xspace}
\usepackage{float}								
\usepackage{tikz}
\usetikzlibrary{positioning,calc}
\usepackage{xcolor}
\definecolor{cRed}{HTML}{DA5527}
\definecolor{cGold}{HTML}{EEB11D}
\definecolor{cBlue}{HTML}{0A73B9}
\definecolor{cGreen}{HTML}{008D0A}
\definecolor{cPurple}{HTML}{990099}

\usepackage{subfig}
\usepackage{graphicx}
\usepackage[algo2e,ruled,noend,linesnumbered]{algorithm2e}
\usepackage{booktabs}							
\usepackage{array}								
\usepackage{bbding} 							
\usepackage{multicol}
\usepackage{enumitem}
\setlist{noitemsep, leftmargin=15pt}


\allowdisplaybreaks


\theoremstyle{definition}
\newtheorem{defn}{Definition}
\theoremstyle{acmdefinition}	

\newtheorem*{rem*}{Remark}


\newcommand{\hide}[1]{}

\def\a{\mathbf{a}}

\def\f{\mathbf{f}}
\def\g{\mathbf{g}}
\def\h{\mathbf{h}}

\def\v{\mathbf{v}}
\def\w{\mathbf{w}}
\def\x{\mathbf{x}}


\def\I{\mathbf{I}}

\def\W{\mathbf{W}}
\def\X{\mathbf{X}}

\def\1{\mathbf{1}}
\def\0{\mathbf{0}}



\def\cC{\mathcal{C}}

\def\cL{\mathcal{L}}
\def\cM{\mathcal{M}}

\def\cO{\mathcal{O}}

\def\cS{\mathcal{S}}
\def\cT{\mathcal{T}}

\def\cV{\mathcal{V}}


\def\bbR{\mathbb{R}}

\def\bbI{\mathbbm{I}}


\def\defas{\triangleq}
\def\Pr{P}

\def\method{\texttt{Auto\-Block}\xspace} 
\def\Movie{\texttt{Movie}\xspace}
\def\Music{\texttt{Music}\xspace}
\def\Grocery{\texttt{Grocery}\xspace}
\def\automation{\textbf{Automation}\xspace}
\def\scalability{\textbf{Scalability}\xspace}
\def\effectiveness{\textbf{Effectiveness}\xspace}

\def\SeqEnc{\mathrm{SeqEnc}}

\def\similarity{\sigma}

\newcommand{\myparagraph}[1]{\smallskip \noindent \textbf{#1}\hspace{0.5em}}
\newcommand{\reducemargin}{\vspace{-0.15in}}


\begin{document}

\title{AutoBlock: A Hands-off Blocking Framework for Entity Matching}

\author{Wei Zhang}
\authornote{Work performed during internship at Amazon.}
\email{zhangwei@cs.wisc.edu}
\affiliation{
  \institution{University of Wisconsin-Madison}
}

\author{Hao Wei}
\email{wehao@amazon.com}
\affiliation{
  \institution{Amazon.com}
}

\author{Bunyamin Sisman}
\email{bunyamis@amazon.com}
\affiliation{\institution{
Amazon.com}}

\author{Xin Luna Dong}
\email{lunadong@amazon.com}
\affiliation{
  \institution{Amazon.com}
}

\author{Christos Faloutsos}
\email{christos@cs.cmu.edu}
\affiliation{
  \institution{Carnegie Mellon University}
}

\author{David Page}
\email{david.page@duke.edu}
\affiliation{
  \institution{Duke University}
}

\renewcommand{\shortauthors}{Wei Zhang et al.}

\begin{abstract}
Entity matching seeks to identify data records over one or multiple data sources that refer to the same real-world entity.
Virtually every entity matching task on large datasets requires  blocking, a step that reduces the number of record pairs to be matched.
However, most of the traditional blocking methods are learning-free and key-based, and their successes are largely built on laborious human effort in cleaning data and designing blocking keys.

In this paper, we propose \method, a novel hands-off blocking framework for entity matching, based on similarity-preserving representation learning and nearest neighbor search.
Our contributions include:
(a) \textbf{Automation}: \method frees users from  laborious data cleaning and blocking key tuning.
(b) \textbf{Scalability}: \method has a sub-quadratic total time complexity and can be easily deployed for millions of records.
(c) \textbf{Effectiveness}: \method outperforms a wide range of competitive baselines on multiple large-scale, real-world datasets, especially when datasets are dirty and/or unstructured.
\end{abstract}



\keywords{Entity Matching; Blocking; Deep Learning; Embedding}

\maketitle

\section{Introduction}
\label{sec:intro}
Entity matching seeks to identify data records over one or multiple data sources that refer to the same real-world entities.
In the era of Big Data and data science, entity matching is playing an increasingly critical role as the value of the data expands exponentially when they are linked to other data to create a unified repository~\cite{Dong2013}.
An exhaustive pairwise comparison grows quadratically with the number of records, which is unaffordable for datasets of even moderate size.
As a result, virtually every entity matching task on large datasets requires \emph{blocking}, a step that effectively reduces the number of record pairs to be considered for matching without potentially ruling out true matches.

A successful application of blocking to an entity matching task should fulfill the following four desiderata:
First, blocking, ideally, should not leave out any true matches (i.e., high \emph{recall}), since only the candidate record pairs generated by blocking will be further examined in the downstream matching step.
Second, the number of candidate pairs should be small so that the cost of applying a usually computationally-expensive matching algorithm is controlled. Therefore, it is desired to have a small ratio of the number of candidate pairs to the number of entities (Pair-Entity ratio, or \emph{P/E ratio}).
Third, \emph{human effort} should not be overspent during the whole blocking process; man-hours on cleaning data and tuning the configuration for blocking algorithms need to be minimized.
Last but not least, the blocking algorithm should be \emph{scalable} enough to handle millions of records.

Although the problem of blocking has been studied for decades, to the best of our knowledge, the dominant and most widely used methods in practice are  key-based methods.
The main idea of these methods is to divide records into a collection of blocks based on several human-crafted \emph{blocking keys} such that we only perform comparisons only among records co-occurring in the same blocks.
To improve recall, many efforts have been focusing on generating multiple customized blocking key~\cite{Gravano2003,Aizawa2005}  on individual attributes or aggregated attributes~\cite{Papadakis2013}.

\myparagraph{Challenges}
The foremost challenge for blocking is the \emph{unnormalization}, or namely the various types of noise, prevalent in the real-world data. As an illustrative example, consider two matched record pairs in Table~\ref{tab:noisy-pair} for songs.
While each tuple in the pair resembles the other, a few common cases of unnormalization can still be observed:
(a) ``Blowin'\hspace{.1em}'' is \emph{misspelled} as ``Blowing'';
(b) \emph{missing values} appear on many attributes;
(c) ``Michael Bubl\'e'' is moved from Composer to Song Writer, which may have resulted from the \emph{ambiguity} in schema definition;
(d) the title of Record 2 contains an extra version description ``[remix]'', possibly due to \emph{imperfect extraction}.

\begin{table}[!tbp]
\centering
\caption{An example of two matched pairs. Various cases of unnormalization are observed.}
\label{tab:noisy-pair}
\resizebox{\columnwidth}{!}{\begin{tabular}{
	>{\hspace{-1mm}} l <{\hspace{-1mm}}
    >{\hspace{-1mm}} l <{\hspace{-1mm}}
    >{\hspace{-1mm}} l <{\hspace{-1mm}}
    >{\hspace{-1mm}} l <{\hspace{-1mm}}
    >{\hspace{-1mm}} l <{\hspace{-1mm}}}
	\toprule
	\textbf{ID} & \textbf{Title}            & \textbf{Album}             & \textbf{Composer} & \textbf{Song Writer} \\ \midrule
	1           & Me and Mrs.\ Jones         & Call Me Irresponsible      & Michael Bubl\'e   &                      \\
	2           & Me and Mrs.\ Jones [remix] &                            &                   & Michael Bubl\'e      \\ \midrule
	3           & Blowin' in the Wind       & The Freewheelin' Bob Dylan & Bob Dylan         &                      \\
	4           & Blowing in the Wind       &                            & Bob Dylan         &                      \\ \bottomrule
\end{tabular}}

\end{table}

The result of the prevalence of unnormalization in real-world data is that blocking becomes rather challenging with traditional key-based blocking methods. This is because these methods rely on exact matching of blocking keys; thus, to deal with the unnormalized data one would have to carefully choose among a large number of combinations of different data cleaning strategies and blocking key design~\cite{Christen2012,Papadakis2016,Ebraheem2018}.
It is often the case that these decisions are \emph{dataset-specific} and not obvious even to domain experts, and many iterations of trial-and-error have to be implemented~\cite{Doan2017}.

As a concrete example, let us consider a typical blocking process for the song records in Table~\ref{tab:noisy-pair}.
A user may start with Title as a blocking key, then realize it covers few true matches because of the prevalence of the unnormalized texts in titles.
Next, the user may try various ways to clean the titles (such as removing all punctuation and version descriptions) and generate multiple customized blocking keys (such as using prefixes, suffixes and/or character/token n-grams of the titles). These attempts, however, need to be made incrementally, and usually cannot be applied altogether, since combining all of them often becomes overkill and results in an unaffordably large P/E ratio.
Furthermore, the user typically has to replicate all these efforts with the other attributes, as Title alone cannot produce high enough recall. Yet other attributes may have their own issues, such as low coverage and extremely large frequencies of particular attribute values (e.g., ``Bob Dylan'' in Composer).
Even worse, the user may need to manually recognize the correlation among a set of attributes, and create an aggregated attribute to assist the blocking (e.g., combining Composer and Song Writer).

Some non-key-based blocking methods (such as MinHash blocking~\cite{Liang2014}) can partially handle the unnormalization issue by supporting fuzzy-matching on attribute values. But these methods rely purely on lexical evidence, so they can still fall short on recall, or obtain reasonable recall but sacrifice P/E ratio, thus leading to a high comparison cost in the downstream matching step.

Therefore, the process of blocking on large-scale, unnormalized real-world data can be costly in human labor; even a well-educated domain expert often needs to spend \emph{days or  weeks} in order to achieve satisfactory blocking results.

\myparagraph{Our Solution}
In this paper, we seek to build a general blocking approach that achieves high recall, low P/E ratio, scalability, and minimum human effort, simultaneously.

We begin with an intuition as follows: \emph{if we had a good similarity metric $\similarity(\cdot, \cdot)$ for quantifying the similarity of any record pair, and could afford to apply the metric $\similarity$ to all possible pairs in the data source, blocking would be done by simply retrieving the nearest neighbors (NNs) for each record}. However, substantiating this scheme is rather challenging, since a good metric $\similarity$ for blocking is usually unknown a priori, and finding NNs is inefficient for most non-trivial $\similarity$'s. Two design questions thus arise naturally:
\begin{itemize}[leftmargin=8mm]
\item [\textbf{(Q1)}] \textbf{Similarity Metric}: \emph{Is it possible to automatically identify a good similarity metric for blocking}?
\item [\textbf{(Q2)}]\textbf{Fast NN Search}: \emph{Given the identified, potentially non-trivial similarity metric, can we find nearest neighbors for each record efficiently}?
\end{itemize}


To answer these two questions, we propose \method, a hands-off blocking framework on tabular records (tuples). To automatically identify a good similarity metric, \method  utilizes a set of pairwise labels that indicates which record pairs are matched, and learns a neural network architecture that produces, for similar tuple pairs, similar real-valued representations (named \emph{signatures}), measured under some standard metric.
Thus, a similarity metric $\similarity$ for tuples is implicitly learned as the composition of the signature function (the neural network architecture) and the standard similarity metric for signatures.
To further enable efficient approximate NN search, we choose the metric for signatures to be cosine and apply cross-polytope locality-sensitive hashing (LSH)~\cite{Andoni2015a}---a theoretically optimal LSH family for cosine similarity---to retrieve the NNs for each tuple in sublinear time.


\myparagraph{Contributions}
We now underscore our main contributions:
\begin{itemize}
\item \automation: We propose a novel hands-off blocking framework, \method, that frees users from the tedious and laborious processes of data cleaning, and designing and tuning blocking keys.
\item \scalability: We show that \method has a sub-quadratic total time complexity for generating the candidate pairs for all tuples and thus can be easily deployed for millions of tuples.

\item \effectiveness: We evaluate \method on multiple large-scale, real-world datasets of various domains, and show that our method outperforms a wide range of competitive baselines on dirty and unstructured datasets, with minimum human effort involved.

\end{itemize}

The rest of the paper is organized as follows. We start with notation and problem definition in Section~\ref{sec:background}. Then we further elaborate our intuition---blocking as NN search---and give an overview of the architecture of \method in Section~\ref{sec:solution_overview}. We formally present the five major steps of \method in Section~\ref{sec:method}. Section~\ref{sec:experiments} shows our experimental results, and Section~\ref{sec:related-work} discusses the related work. Finally, we conclude and list several future directions in Section~\ref{sec:conclusion}.

\section{Problem Definition}
\label{sec:background}
Suppose our dataset consists of $n$ tuples, and each tuple has $m$ attributes. We denote the $i$-th tuple by $\x_i \defas [\a_{i1}, \a_{i2}, \ldots, \a_{im}]$, where $\a_{ij} $ is the $j$-th attribute value for $\x_i$ and $\defas$ stands for ``is defined as.'' We use $[n]$ as a shorthand for the set $\{1, 2, \ldots, n\}$.
In this way, each attribute value $\a_{ij}$ can be represented as a sequence of tokens ${[w_{ijk}]}_{k = 1}^{l_{ij}}$, where $l_{ij}$ is the sequence length for $\a_{ij}$ and $w_{ijk}$ is the $k$-th token. We assume that all tokens are drawn from a unified vocabulary $\cV$. We emphasize two important properties of the vocabulary $\cV$ for unnormalized text:
(a) openness---$\cV$ can contain out-of-vocabulary tokens and have infinite cardinality; and (b) prevalence of missing values---many $l_{ij}$'s can be zero.

We now give a formal definition of blocking as follows.
\begin{defn}[Blocking]
Given a data source $\X \defas [\x_1, \ldots, \x_n]$ of $n$ tuples, \emph{blocking} outputs a subset of candidate pairs $\cC \subseteq [n] \times [n]$, such that for any $(i, i') \in \cC$, tuple $x_i$ and tuple $x_{i'}$ are likely to refer to same entity.
\end{defn}
\begin{rem*}[]
As noted in the definition, high recall is the foremost requirement for blocking;  nevertheless, it is also important to control the size of $\cC$ to achieve the goal of prescreening for matching.
\end{rem*}

In addition to the tuple set $\X$, we also assume that we are able to access a \emph{positive label set} $\cL \subseteq [n] \times [n]$, such that $(\x_i, \x_{i'})$ is a match for all $(i, i') \in \cL$.
These positive labels can be generated with certain strong keys whenever available (UPC code for grocery products, ISBN numbers for books, SSN for residents, etc.), or obtained by manual annotation.
Note that this $\cL$ can be reused for training the downstream matching algorithm, which requires collecting positive labels anyway; thus requiring such $\cL$ in blocking does not incur additional human effort.

\section{Beyond Blocking: Nearest Neighbor Search}
\label{sec:solution_overview}

We hypothesize that if there is a perfect similarity metric, and efficiency is not a concern, blocking can be achieved by NN search. We refer to this scheme as \emph{NN blocking} and specify it in Algorithm~\ref{alg:nn_blocking}.

\begin{algorithm2e}[htbp]
\caption{Nearest neighbor (NN) blocking}
\label{alg:nn_blocking}

\KwIn{tuple set $\X$, a similarity  metric $\similarity(\cdot, \cdot)$, and threshold $\theta$}
\KwOut{candidate pairs $\cC$}
$\cC := \emptyset$ \;
\For{$i = 1, \ldots, n$}{
	$\cC := \cC \cup \left\{(i, i')|\, \similarity(\x_i, \x_{i'}) > \theta, \forall i' < i  \right\}$ \;
}

\end{algorithm2e}

In fact, a wide range of existing blocking methods can be viewed as special cases of NN blocking, with their own similarity metrics.
Behind the traditional key-based blocking methods, for example, are binary similarity metrics (row 2--5 in Table~\ref{tab:key-and-similarity-metric}).
This observation exposes the key reason why these methods are susceptible to unnormalized data: their similarity metrics rely on exact string matching and are too coarse-grained.

Another example is MinHash blocking, based on set-based similarity (row 6 in Table~\ref{tab:key-and-similarity-metric}). MinHash blocking first converts each tuple into a set of representative pieces that typically comprise individual tokens and token n-grams, then measures the similarity of tuples based on their set representations. Jaccard similarity and its LSH family---MinHash---are used to provide efficient approximate NN search.
However, Jaccard similarity only captures the lexical similarity between tuples and thus can be suboptimal for difficult domains where syntactic or semantic similarity is required.

\begin{table}[tbp]
\centering
\caption{Common blocking methods and their corresponding similarity metrics}
\resizebox{\columnwidth}{!}{\begin{tabular}{lll}
\toprule
\textbf{Method} 		& \textbf{Key (Example)} & \textbf{Similarity Metric} \\
\midrule
Single Key    & $f(x)=x.\text{title}$  & $\bbI(f(x) = f(y))$   \\
\midrule
Conjunctive Key~\cite{Bilenko2006,Michelson2006}
	& $f(x) = (x.\text{title}, x.\text{album})$
    & $\bbI(f(x) = f(y))$ \\
\midrule
Disjunctive Key~\cite{Bilenko2006,Michelson2006}
	& $\begin{aligned} & f_1(x) = x.\text{title} \\
    				   & f_2(x) = x.\text{album} \\ \end{aligned}$
    & $\begin{aligned} & \bbI(f_1(x) = f_1(y)) \\
       				   & \vee \bbI(f_2(x) = f_2(y)) \\
       \end{aligned}$ \\
\midrule
Customized Key~\cite{Gravano2003,Aizawa2005,Papadakis2013}
	& $\begin{aligned} f(x) = \text{firstTwoToken}(x.\text{title}) \end{aligned}$
    & $\bbI(f(x) = f(y))$ \\
\midrule
MinHash~\cite{Liang2014}
	& $\begin{aligned} f(x) = \text{nGrams}(x.\text{title}) \end{aligned}$
    & $\begin{aligned}
    	\text{JaccardSim}(f(x), f(y))
    \end{aligned}$ \\
\bottomrule
\end{tabular}
}
\label{tab:key-and-similarity-metric}
\end{table}

\myparagraph{Overview of Our Architecture}
Our framework \method follows the scheme of NN blocking and leverages a positive label set to implicitly learn the similarity metric. The overall architecture of \method comprises five steps, as illustrated in Figure~\ref{fig:overall-architecture}, in which \emph{steps (1)--(4) together form our solution for design question \textbf{Q1} and step (5) is our solution for \textbf{Q2}}.
We briefly describe the five steps below and explain them in details in the next section.
\begin{enumerate}
\item \textbf{Token embedding}: A word-embedding model transforms each token to a token embedding  (Section~\ref{subsec:token-embedding}).

\item \textbf{Attribute embedding}: For each attribute value of a tuple, an attention-based neural network encoder converts the input sequence of token embeddings to an attribute embedding (Sectio~\ref{subsec:attribute-embedding}).

\item \textbf{Tuple signature}: Multiple signature functions combine the attribute embeddings of each tuple and produce multiple tuple signatures (one per signature function) (Sectio~\ref{subsec:tuple-signature}).

\item \textbf{Model training}: Equipped with the positive label set, the model is trained with an objective that maximizes the differences of the cosine similarities between the tuple signatures of matched pairs and between unmatched pairs (Section~\ref{subsec:supervised-training}).

\item \textbf{Fast NN search}: The learned model is applied to compute the signatures for all tuples, and an LSH family for cosine similarity is used to retrieve the nearest neighbors for each tuple to generate candidate pairs for blocking (Section~\ref{subsec:LSH}).
\end{enumerate}

\begin{figure}[tbp]
\centering
\resizebox{1.0\columnwidth}{!}{\begin{tikzpicture}[
	axis/.style={line width=3, -latex, color=cBlue!80!white},
	embedding/.style={draw, thick, minimum width=1*2ex, minimum height=6*2ex, inner sep=0},
	label/.style={node font=\bf \Large},
	square/.style={rectangle, minimum width=2ex, minimum height=2ex, inner sep=0}
	]

  \node (T1) at (0,0) {
    \begin{tikzpicture}
		\coordinate[square, fill=cBlue!70!white, draw] (e00) at (0, 0);
		\coordinate[square, fill=cBlue!70!white, draw, right=0 of e00] (e01);
		\coordinate[square, fill=cBlue!55!white, draw, right=0 of e01] (e02);
		\coordinate[square, fill=cBlue!85!white, draw, above=0 of e00] (e10);
		\coordinate[square, fill=cBlue!40!white, draw, right=0 of e10] (e11);
		\coordinate[square, fill=cBlue!50!white, draw, right=0 of e11] (e12);
        \coordinate[square, fill=cBlue!30!white, draw, above=0 of e10] (e20);
		\coordinate[square, fill=cBlue!60!white, draw, right=0 of e20] (e21);
		\coordinate[square, fill=cBlue!70!white, draw, right=0 of e21] (e22);
        \coordinate[square, fill=cBlue!80!white, draw, above=0 of e20] (e30);
        \coordinate[square, fill=cBlue!40!white, draw, right=0 of e30] (e31);
        \coordinate[square, fill=cBlue!60!white, draw, right=0 of e31] (e32);
        \coordinate[square, fill=cBlue!50!white, draw, above=0 of e30] (e40);
        \coordinate[square, fill=cBlue!30!white, draw, right=0 of e40] (e41);
        \coordinate[square, fill=cBlue!70!white, draw, right=0 of e41] (e42);
	\end{tikzpicture}
	};

  \node[above=2 of T1.west, anchor=west]  (T2) {
	\begin{tikzpicture}
		\coordinate[square, fill=cGreen!20!white, draw] (e00) at (0, 0);
		\coordinate[square, fill=cGreen!40!white, draw, right=0 of e00] (e01);
		\coordinate[square, fill=cGreen!60!white, draw, right=0 of e01] (e02);
		\coordinate[square, fill=cGreen!60!white, draw, right=0 of e02] (e03);
		\coordinate[square, fill=cGreen!100!white, draw, above=0 of e00] (e10);
		\coordinate[square, fill=cGreen!40!white, draw, right=0 of e10] (e11);
		\coordinate[square, fill=cGreen!60!white, draw, right=0 of e11] (e12);
        \coordinate[square, fill=cGreen!100!white, draw, right=0 of e12] (e13);
        \coordinate[square, fill=cGreen!60!white, draw, above=0 of e10] (e20);
		\coordinate[square, fill=cGreen!60!white, draw, right=0 of e20] (e21);
		\coordinate[square, fill=cGreen!80!white, draw, right=0 of e21] (e22);
        \coordinate[square, fill=cGreen!40!white, draw, right=0 of e22] (e23);
        \coordinate[square, fill=cGreen!60!white, draw, above=0 of e20] (e30);
        \coordinate[square, fill=cGreen!100!white, draw, right=0 of e30] (e31);
        \coordinate[square, fill=cGreen!40!white, draw, right=0 of e31] (e32);
        \coordinate[square, fill=cGreen!60!white, draw, right=0 of e32] (e33);
        \coordinate[square, fill=cGreen!80!white, draw, above=0 of e30] (e40);
        \coordinate[square, fill=cGreen!80!white, draw, right=0 of e40] (e41);
        \coordinate[square, fill=cGreen!60!white, draw, right=0 of e41] (e42);
        \coordinate[square, fill=cGreen!20!white, draw, right=0 of e42] (e43);
	\end{tikzpicture}
  };
  \node[above=2 of T2.west, anchor=west]  (T3) {
	\begin{tikzpicture}
		\coordinate[square, fill=cPurple!20!white, draw] (e00) at (0, 0);
		\coordinate[square, fill=cPurple!40!white, draw, right=0 of e00] (e01);
		\coordinate[square, fill=cPurple!60!white, draw, above=0 of e00] (e10);
		\coordinate[square, fill=cPurple!20!white, draw, right=0 of e10] (e11);
        \coordinate[square, fill=cPurple!60!white, draw, above=0 of e10] (e20);
		\coordinate[square, fill=cPurple!40!white, draw, right=0 of e20] (e21);
        \coordinate[square, fill=cPurple!60!white, draw, above=0 of e20] (e30);
        \coordinate[square, fill=cPurple!20!white, draw, right=0 of e30] (e31);
        \coordinate[square, fill=cPurple!80!white, draw, above=0 of e30] (e40);
        \coordinate[square, fill=cPurple!40!white, draw, right=0 of e40] (e41);
	\end{tikzpicture}
  };

  \node[label, minimum width=60, left=0 of T1] (L1) {Attribute m};
  \node[label, minimum width=60, left=0 of T2] (L2) {Attribute 2};
  \node[label, minimum width=60, left=0 of T3] (L3) {Attribute 1};

  \node[right=2 of T1]  (A1) {
    \begin{tikzpicture}
		\coordinate[square, fill=cBlue!20!white, draw] (e0) at (0, 0);
		\coordinate[square, fill=cBlue!60!white, draw, above=0 of e0] (e1);
		\coordinate[square, fill=cBlue!40!white, draw, above=0 of e1] (e2);
		\coordinate[square, fill=cBlue!80!white, draw, above=0 of e2] (e3);
		\coordinate[square, fill=cBlue!100!white, draw, above=0 of e3] (e4);
	\end{tikzpicture}
  };
  \node[above=2 of A1.west, anchor=west]  (A2) {
	\begin{tikzpicture}
		\coordinate[square, fill=cGreen!60!white, draw] (e0) at (0, 0);
		\coordinate[square, fill=cGreen!40!white, draw, above=0 of e0] (e1);
		\coordinate[square, fill=cGreen!80!white, draw, above=0 of e1] (e2);
		\coordinate[square, fill=cGreen!20!white, draw, above=0 of e2] (e3);
		\coordinate[square, fill=cGreen!80!white, draw, above=0 of e3] (e4);
	\end{tikzpicture}
  };
  \node[above=2 of A2.west, anchor=west]  (A3) {
	\begin{tikzpicture}
		\coordinate[square, fill=cPurple!40!white, draw] (e0) at (0, 0);
		\coordinate[square, fill=cPurple!80!white, draw, above=0 of e0] (e1);
		\coordinate[square, fill=cPurple!20!white, draw, above=0 of e1] (e2);
		\coordinate[square, fill=cPurple!60!white, draw, above=0 of e2] (e3);
		\coordinate[square, fill=cPurple!20!white, draw, above=0 of e3] (e4);
	\end{tikzpicture}
  };

  \node[right=2.2 of $(A1)!0.5!(A2)$]  (S1) {
	\begin{tikzpicture}
		\coordinate[square, fill=cRed!80!white, draw] (e0) at (0, 0);
		\coordinate[square, fill=cRed!40!white, draw, above=0 of e0] (e1);
		\coordinate[square, fill=cRed!80!white, draw, above=0 of e1] (e2);
		\coordinate[square, fill=cRed!20!white, draw, above=0 of e2] (e3);
		\coordinate[square, fill=cRed!60!white, draw, above=0 of e3] (e4);
	\end{tikzpicture}
  };
  \node[right=2.2 of $(A2)!0.5!(A3)$]  (S2) {
	\begin{tikzpicture}
		\coordinate[square, fill=cGold!100!white, draw] (e0) at (0, 0);
		\coordinate[square, fill=cGold!40!white, draw, above=0 of e0] (e1);
		\coordinate[square, fill=cGold!60!white, draw, above=0 of e1] (e2);
		\coordinate[square, fill=cGold!40!white, draw, above=0 of e2] (e3);
		\coordinate[square, fill=cGold!80!white, draw, above=0 of e3] (e4);
	\end{tikzpicture}
  };

  \node[label, below=1.2 of T1] {1. Token Embedding};
  \node[label, below=0.5 of A1] {2. Attribute Embedding};
  \node[label, below=0.8 of S1] {3. Tuple Signature};
  \node[label, right=0.5 of S2] (step4) {4. Model Training};
  \node[label, right=3 of $(S1)!0.5!(S2)$] (step5) {5. LSH-Based NN Search};
  \node[label, below=0 of step5] {\input{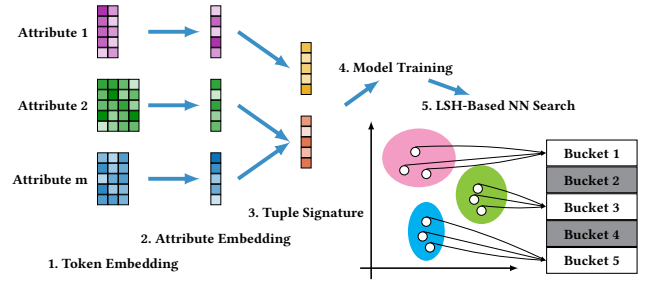}};

  \draw[axis] ([xshift=-45]A1.west)-- ([xshift=-5]A1.west);
  \draw[axis] ([xshift=-45]A2.west)-- ([xshift=-5]A2.west);
  \draw[axis] ([xshift=-45]A3.west)-- ([xshift=-5]A3.west);
  \draw[axis] ([xshift=10]A1.east)-- ([xshift=-5]S1.west);
  \draw[axis] ([xshift=10]A2.east)-- ([xshift=-5]S1.west);
  \draw[axis] ([xshift=10]A3.east)-- ([xshift=-5]S2.west);
  \draw[axis] ([xshift=30]$(S1)!0.5!(S2)$)-- (step4);
  \draw[axis] (step4)-- (step5);

\end{tikzpicture}}
\caption{Overall architecture of \method}
\label{fig:overall-architecture}
\end{figure}

\section{Proposed: \method}
\label{sec:method}
In this section, we present the five steps of \method in details.

\subsection{Token Embedding}
\label{subsec:token-embedding}


The first step of \method is to convert each token into a low-dimensional embedding vector using a word embedding model.
We use fastText~\cite{Bojanowski2017} to obtain embeddings for tokens.
Unlike other word embedding models~\cite{Mikolov2013,Pennington2014} that learn a distinct embedding vector for each word, fastText learns embeddings for character n-grams and computes the embedding for a word as the sum of the embeddings of all n-grams appeared in that word.
As a result, fastText can naturally handle rare tokens, whereas other word embedding models often regard these tokens as out-of-vocabulary tokens and replace them with a special token such as ``UNK''.
We have emperically observed that fastText is more robust than alternative methods to common typos and misspelling, and can produce similar embeddings for homomorphically similar tokens. As a result, we choose fastText as our way to convert tokens to token embeddings.

\subsection{Attention-based Attribute Embedding}
\label{subsec:attribute-embedding}

The second step of \method takes the sequence of token embeddings for each attribute as input and outputs an embedding that encodes the information of that attribute.
This step is related to phrase/sentence embedding learning in NLP;\@ but the major challenges are that the nature of different attributes varies regarding their length, word choice, and usage, and that the sequential order of sentences in natural language is missing in tabular data.

We propose an attention-based attribute encoder (called \emph{attentional encoder} for short) for computing attribute embeddings. The main idea behind attentional encoders is \emph{averaging}---the embedding of an attribute is represented by a weighted average of its token embeddings.
But rather than fixing a weight for each token a priori, the attentional encoder learns the weight for each token depending on its semantics, position, and surrounding tokens in the input token sequence.
This capability is especially useful when attributes are long and exhibit clear structural patterns.
For example, the extra version descriptions in the song titles often (but not always) appear at the end of the titles and are enclosed by parenthesis or square brackets (see tuple 2 in Table~\ref{tab:noisy-pair}). Given enough positive pairs in the training data in which one member of the pair has such a version description but the other does not, the attentional encoder is able to recognize such patterns and pay less ``attention'' (i.e., assigning lower weights) to the tokens that form the version description at the end of music title.

Formally, let $\v_1, \v_2, \dots, \v_l \in \bbR^d$ be the sequence of token embeddings, where $l$ is the sequence length and $d$ is the dimension of token embeddings.
An attentional encoder computes the weights of the input tokens as follows:
\begin{align}
\h_1, \h_2, \ldots, \h_l &= \SeqEnc(\v_1, \ldots, \v_l), \label{eq:seq2seq}\\
\alpha_1, \alpha_2, \ldots, \alpha_l &= \mathrm{SoftMax}(\w^T \h_1, \ldots, \w^T \h_l), \label{eq:attention-weights} \\
\beta_k &= \rho \alpha_k + (1 - \rho) \frac{1}{l}, \quad \forall k \in [l].   \label{eq:final-weights}
\end{align}
Here, $\SeqEnc(\cdot)$ can be any neural network architecture that takes a sequential input and generates an output for every input position. Possible choices include the standard recurrent neural network (RNN), bidirectional long short-term memory network (Bi-LSTM), 1D convolutional neural network, and transformers~\cite{Vaswani2017}.
The hidden states are then transformed into \emph{attention weights} in~\eqref{eq:attention-weights}, which are further smoothed with uninformative weight $1/l$ in~\eqref{eq:final-weights}, controlled by a hyper-parameter $\rho \in [0, 1]$. Finally, the attribute embedding is defined as
\begin{equation}
g(\v_1, \ldots, \v_l) \defas \sum_{k = 1}^l \beta_k \v_k. \label{eq:attribute-emb}
\end{equation}


\subsection{Tuple Signature}
\label{subsec:tuple-signature}

In the third step of \method, we would like to combine the attribute embeddings and generate representations at the tuple level, such that representations for matched tuples have large cosine similarity.
Before a deep dive into which model to use, let us first consider a more fundamental question: \emph{what would happen if we compress the information in a tuple into a single representation for blocking}?

\begin{example}
  Consider three tuples for the same song with attributes on Title, Album, and Composer:
  \begin{equation*}
    \small
    \begin{aligned}
    \x_1 &= (\text{Me and Mrs.\ Jones}, \emptyset, \emptyset),  \\
    \x_2 &= (\text{Me and Mrs.\ Jones}, \text{Call Me Irresponsible}, \text{Michael Bubl\'e}), \\
    \x_3 & = (\text{Me \& Mrs.}, \text{Call Me Irresponsible}, \text{Michael Bubl\'e}),
    \end{aligned}
  \end{equation*}
  where $\emptyset$ denotes missing value.
  Intuitively, the embeddings need to be dominated by Title in order to ensure $emb(\x_1) \approx emb(\x_2)$.
  This would, however, imply that the similarity between $emb(\x_1)$ and $ emb(\x_3)$ is not high (as $\x_1$ and $\x_3$ differ on Title).
  \label{ex:paradox}
\end{example}

This example indicates that when tuples contain a wide variety of attributes and can possibly have many missing values, representing each tuple with only \emph{one} embedding vector would result in low similarity for certain positive pairs.
Consequently, one would have to lower the similarity threshold $\theta$ in order to retrieve pairs such as both $(\x_1, \x_2)$ and $(\x_2, \x_3)$.
A small $\theta$, however, would also incur many false positive pairs, making the P/E ratio unaffordably large.

To address this issue, we propose to generate \emph{multiple signatures} such that each signature only captures a partial, distinct aspect for tuples, and two tuples are considered similar (and thus regarded as a candidate pair for blocking) as long as they are similar for one signature. With this design, we are able to overcome the issue in Example~\ref{ex:paradox}, as shown in the next example.

\begin{example} Continue Example~\ref{ex:paradox}. Now suppose we have two signature functions  $sig_1(\cdot)$ and $sig_2(\cdot)$ applied on Title, and on Album and Composer, respectively. Then we will have $sig_1(\x_1) = sig_1(\x_2)$, $sig_2(\x_2) = sig_2(\x_3)$. Thus, regardless that $sig_1(\x_1) \not\approx sig_1(\x_3)$ and $sig_2(\x_1) \not\approx sig_2(\x_3)$, the two candidate pairs $(\x_1, \x_2)$ and $(\x_2, \x_3)$ can still be retrieved with large threshold $\alpha$ by $sig_1(\cdot)$ and $sig_2(\cdot)$, respectively.

  \label{ex:paradox-sovled}
\end{example}

Formally, let $\g_1, \ldots, \g_m \in \{\emptyset\} \cup \bbR^d $ denote the embeddings of the $m$ attributes of a tuple. We define the $s$-th signature function to be a weighted average over non-missing attributes, i.e.,
\begin{equation}
f^{(s)}(\g_1, \ldots, \g_m) \defas \sum_{j = 1}^m \bbI(\g_j \neq \emptyset) w_{sj}  \g_j,
\label{eq:signature-function}
\end{equation}
where $\w_s \defas {[w_{sj}]}_{j =1}^m \geq \0$ is a nonnegative weight to be estimated, $\bbI(\cdot)$ is the indicator function, and $f^{(s)}$ is set to be $\emptyset$ when $\bbI(\g_j \neq \emptyset) w_j$ is zero for all $j\in [m]$. Given $S$ such signature functions ${\{f^{(s)}(\cdot)\}}_{s = 1}^S$, and denoting the signature computed by $f^{(s)}(\cdot)$ for tuple $\x_i$ by $\f^{(s)}_i$, the final similarity metric used in \method is the maximum cosine similarity over $S$ pairs of signatures, namely
\begin{equation}
\similarity(\x_i, \x_{i'}) \defas \max_{s = 1, \ldots, S} \cos \left(\f^{(s)}_i, \f^{(s)}_{i'} \right),
\label{eq:similarity-metric}
\end{equation}
where $\cos(\f, \f') \defas \langle \f, \f' \rangle / \left( \Vert \f \Vert_2 \cdot \Vert \f' \Vert_2 \right)$ is the cosine similarity, and we set the cosine similarity to zero if either $\f$ or $\f'$ is $\emptyset$.
Note that the cosine similarity is scale-invariant; we thus require $\Vert \w_s \Vert_2 = 1$ for all $s\in [S]$ without loss of generality.
\subsection{Model Training}
\label{subsec:supervised-training}

Now we describe how the proposed attentional encoders and signature functions are trained with the given positive label sets $\cL$. Our training algorithm is based on the following idea:
\emph{for any $(i, i') \in \cL$, the tuple pair $(\x_i, \x_{i'})$ should be more similar than pairs $(\x_i, \star)$ and $(\x_{i'},\star)$, where $\star$ denotes an irrelevant tuple}.

Embracing this intuition we design an auxiliary multi-class classification task for training.
Specifically, for each positive pair $(i, i) \in \cL$, we randomly sample a small set of indices $U_{i, i'} \subset [n] \setminus \{i, i'\}$.
Since typically $n \gg |U_{i, i'}|$ and duplicates in $\X$ are rare, one can reliably assume that the tuples corresponding to  $U_{i, i'}$ are irrelevant to $\x_i$ and $\x_{i'}$.
Then given signature function $f^{(s)}$, the probability of choosing $(\x_i, \x_{i'})$ to be the only positive pair among all $2|U_{i, i'}| + 1$ pairs from  $\{i, i'\} \cup U_{i, i'}$ that involve $\x_i$ or $\x_{i'}$  is defined as
\begin{equation}
\begin{aligned}
  & \Pr_s   \left[ (\x_i, \x_{i'}) ; U_{i, i'} \right] \\
  \defas & \frac{ e^{\similarity(\f^{(s)}_i, \f^{(s)}_{i'})} }
{  e^{\similarity(\f^{(s)}_i, \f^{(s)}_{i'})} + \sum_{j \in U_{i, i'}}  \big[ e^{\similarity(\f^{(s)}_{i}, \f^{(s)}_{j})} + e^{\similarity(\f^{(s)}_{i'}, \f^{(s)}_{j})} \big]  }.
\end{aligned}
\label{eq:selection-prob}
\end{equation}

The attentional encoders ${\{g^{(j)}\}}_{j = 1}^m$ and signature function weights $ {\{\w_s\}}_{s = 1}^S$ can thus be learned end-to-end by maximizing the summed log-probability over all signatures and all positive pairs, namely
\begin{equation}
\max_{ {\{g^{(j)}\}}_{j = 1}^m, {\{\w_s\}}_{s = 1}^S } \frac{1}{\vert \cL \vert }\sum_{(i, i') \in \cL} \sum_{s = 1}^S \log \Pr_s\left[ (\x_i, \x_{i'}) ; U_{i, i'} \right]
\label{eq:optim-prob}
\end{equation}
We apply Adam, a variant of stochastic gradient descent (SGD) algorithms, to optimize the objective. After each update, we further project all signature weights into the feasible region to ensure non-negativity and unit norm.

The optimization problem defined in~\eqref{eq:optim-prob}, however, does not impose any regularization on signature weights and thus may end up with $S$ identical, individually optimal signature functions.
Ideally signatures should be independent, or orthogonal: $\W^T \W = \I_S$, so that each signature reflects a distinct aspect of tuples. Thus we could incorporate into the optimization problem a penalty such as $\left\Vert \W^T \W -  \I_S \right\Vert$, or use augmented Lagrangian methods~\cite{Nocedal2006}.
Nevertheless, when the optimization problem is situated into a larger task as here, tuning the penalty coefficient or related hyperparameters becomes unwieldy and impractical.

We instead propose a simple sequential algorithm to achieve orthogonality.
The main idea is that signature functions are trained one at a time, and when training the current signature function, all attributes used (i.e., associated with positive weights) by the previously identified signature functions are marked as unusable.
In this way, the attributes used by different signature functions will not overlap and thus satisfy orthogonality naturally.
Eventually, the algorithm terminates when either all attributes have been used, or $S$ signature functions have been learned.
In fact, this sequential algorithm not only eliminates the need for introducing new hyperparameters (as required by standard constrained optimization algorithms), it may also eliminate the need for tuning $S$: one could set the initial value of $\bar{S}$ as large as $m$ and let the sequential algorithm end up with an appropriate $S$.

Algorithm~\ref{alg:sequential-learning} sketches the final training procedure.

\begin{algorithm2e}[!tbp]
  \caption{Model Training}
  \label{alg:sequential-learning}

  \KwIn{tuple set $\X$, label set $\cL$,  maximum \# of iteration $T$, and maximum \# of signatures $\bar{S}$.}
  \KwOut{attribute encoders $ {\{g^{(j)}\}}_{j = 1}^m$ and signature function weights ${\{\w_s\}}_{s = 1}^S$}
  initialize $ {\{g^{(j)}\}}_{j = 1}^m$ and $ {\{\w_s\}}_{s = 1}^S$\;
  initialize attribute set $\cM := \{1, \ldots, m\}$\;
  \For{$s = 1, \ldots, \bar{S}$}{
    \For{$t = 1, \ldots, T$}{
      sample a mini-batch $\cL_B \subset \cL$\;
      sample a $U_{i, i'}$ for each $(i, i') \in \cL_B$\;
      update $ {\{g^{(j)}, w_{sj}\}}_{j \in \cM}$ to improve $ \frac{1}{\vert \cL_B \vert }\sum_{(i, i') \in \cL_B}  \log \Pr_s\left[ (\x_i, \x_{i'}) ; U_{i, i'} \right] $ \;
      project $\w_s$ to $\{\w \in \bbR_{\geq 0}^m | \w_{\cM} = 0, \Vert \w \Vert_2 = 1\}$\;
    }
    $\cM := \cM \setminus \{j \in [m] |\,w_{sj} > 0 \}$\;
    \If {$\cM = \emptyset $}{
      $S := s$; break\;
    }
  }
\end{algorithm2e}

\subsection{Fast NN Search}
\label{subsec:LSH}

In the last step of \method, our goal is to efficiently retrieve, for each query tuple, the nearest neighbors whose similarities to the query are above a specified threshold $\theta > 0$ according to the metric $\sigma$ defined by~\eqref{eq:similarity-metric} with the computed signatures. Note that $\sigma$ in~\eqref{eq:similarity-metric} takes a maximum form; thus we can conduct NN search for each signature function with threshold $\theta$ separately, then take the union of all candidate pairs found for each signature function (followed by a de-duplication step) as the final candidate pairs.

So the task is reduced to a classic, high-dimensional NN search problem with cosine similarity.
We choose to solve this problem with locality-sensitive hashing (LSH), an effective technique for this problem that offers \emph{provable} sublinear query time.
Specifically, we apply cross-polytope LSH, a state-of-the-art LSH family for cosine similarity that not only enjoys the theoretically optimal query time complexity but also allows an efficient implementation.

With cross-polytope LSH, one can prove that the NN search problem can be approximately solved in a sublinear query time, as shown in Theorem~\ref{thm:lsh-query-complexity}.
\begin{theorem}
  Given an $n$-point dataset $\X \subset \bbR^d$, there exists an algorithm based on cross-polytope LSH satisfying: for any query $\x$ and similarity thresholds $-1 < \theta' < \theta < 1$, if there exists a point $\x^* \in \X$ such that $ \cos(x, x^*) \geq \theta$,  the algorithm will with probability at least $1 - \varepsilon^K$  retrieve a point $\x' \in \X$ with $\cos(x, x') \geq \theta'$ in query time $\cO( K \cdot d \cdot n^{\rho}) $, where $\varepsilon < \frac{1}{3} + \frac{1}{e} $ and  $\rho = \frac{1 - \theta}{1 - \theta'} \cdot \frac{1 + \theta'}{1 + \theta} + o(1)$.
  \label{thm:lsh-query-complexity}
\end{theorem}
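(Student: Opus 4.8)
The plan is to reduce the stated guarantee to the standard analysis of cross-polytope LSH, then bookkeep the parameters. First I would recall the classical LSH framework for approximate near-neighbor search: one builds $L$ independent hash tables, each indexed by a concatenation of $k$ elementary hash functions drawn from an LSH family $\cH$ that is $(r_1, r_2, p_1, p_2)$-sensitive, meaning points within similarity $\theta$ collide with probability at least $p_1$ under a single $h \in \cH$ while points of similarity at most $\theta'$ collide with probability at most $p_2$. For cross-polytope hashing on the unit sphere (after $\ell_2$-normalizing every signature, which is harmless since $\similarity$ in~\eqref{eq:similarity-metric} is already defined through cosine), the collision probability is a known explicit decreasing function of the angle, and Andoni et al.~\cite{Andoni2015a} show that the resulting exponent $\rho = \log(1/p_1)/\log(1/p_2)$ achieves $\rho = \frac{1-\theta}{1-\theta'}\cdot\frac{1+\theta'}{1+\theta} + o(1)$ for cosine thresholds. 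I would quote this as the parameter that governs both the number of tables and the query time.

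Next I would assemble the query-time bound. With $k$ chosen so that a random ``far'' point (cosine $\le \theta'$) survives a single table with probability $n^{-1}$, i.e. $k = \lceil \log_{1/p_2} n \rceil$, the expected number of spurious candidates scanned per table is $O(1)$, so a single table costs $O(k\cdot d)$ to hash the query plus $O(d)$ per retrieved candidate, giving $O(k\cdot d)$ per table in expectation. Taking $L = n^{\rho}$ tables makes the probability that the true near point $\x^*$ (with $\cos(\x,\x^*)\ge\theta$) is missed in \emph{every} table at most $(1 - p_1^{k})^{L} \le (1 - n^{-\rho})^{n^{\rho}} \le 1/e$, and the total query time becomes $O(k\cdot d\cdot n^{\rho})$. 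To amplify the success probability to $1 - \varepsilon^{K}$ as in the statement, I would run $K$ independent copies of this whole data structure (or equivalently scale $L$ by a factor $K$); the failure probability per copy is some constant $\varepsilon$, and for cross-polytope LSH one can pin this constant below $\tfrac{1}{3} + \tfrac{1}{e}$ by a direct estimate on $(1 - p_1^{k})^{L}$ together with the $O(1)$ expected-candidate Markov bound, yielding the claimed $\varepsilon < \tfrac{1}{3} + \tfrac{1}{e}$ and query time $O(K\cdot d\cdot n^{\rho})$.

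I expect the main obstacle to be the quantitative control of the constant $\varepsilon$: the clean asymptotics $\rho = \log(1/p_1)/\log(1/p_2)$ hide $o(1)$ terms and rounding in the choice of $k$, so pinning a concrete numerical bound like $\tfrac{1}{3}+\tfrac{1}{e}$ requires being careful that (i) the ``true point missed everywhere'' event contributes at most $1/e$ (from $(1-1/L)^{L}$), and (ii) the ``too many far candidates, so we abort before finding $\x^*$'' event contributes at most $1/3$ (from Markov's inequality on an expectation of $1$ with an abort threshold of $3$), and that these combine by a union bound without the lower-order slack eating into the margin. A secondary, more routine point is to make explicit that normalizing the signatures to the unit sphere does not change $\similarity$ and that the $\max$ over $S$ signature functions is handled outside the theorem (as already noted in Section~\ref{subsec:LSH}, by running one such data structure per signature and unioning), so that Theorem~\ref{thm:lsh-query-complexity} only needs to address a single cosine-similarity instance on $\bbR^d$. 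Everything else---the $O(k\cdot d)$ hashing cost of one cross-polytope evaluation, the bound $p_2 \le 1/(1+\text{const})$ ensuring $k = O(\log n)$, and the arithmetic giving the stated $\rho$---is standard and I would cite rather than reprove it.
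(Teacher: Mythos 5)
Your proposal is correct and follows essentially the same route as the paper: reduce the statement to the standard LSH near-neighbor data structure (the $(r_1,r_2,p_1,p_2)$-sensitivity analysis you sketch is exactly the content of Theorem~3.4 of~\cite{Har-Peled2012}, which the paper cites as a black box), plug in the cross-polytope exponent from~\cite{Andoni2015a}, and amplify to $1-\varepsilon^K$ with $K$ independent copies. The only cosmetic difference is that the paper states the cited guarantee in Euclidean terms on the unit sphere and performs the substitution $r=\sqrt{2-2\theta}$, $c=\sqrt{(1-\theta')/(1-\theta)}$ explicitly via $\cos(x,x') = 1-\tfrac12\Vert x-x'\Vert_2^2$, whereas you quote the exponent directly in cosine form and instead unpack where the constant $\tfrac13+\tfrac1e$ comes from.
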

\begin{proof}
  The proof is in the Appendix.
\end{proof}

 Careful readers may have noticed that Theorem~\ref{thm:lsh-query-complexity} only guarantees the query time complexity for approximate instead of exact NN search. So why is a technique for approximate NN search sufficient in our case?
This is because by learning multiple signature functions, each of which focuses on only a particular aspect of a tuple, we empirically observe that resultant signatures are fairly similar for most positive pairs with similarities rarely falling below $0.8$. In contrast, the similarities of most random pairs center around $0.2$ and seldom exceed $0.4$. That means in our case, we can afford to set $\theta = 0.8$ and $\theta'= 0.4$ without incurring too many false positives. This would correspond to a sublinear query time complexity $\cO(K \cdot d \cdot n ^{1 / 3.86} )$.
The empirical evaluation in Section~\ref{subsec:scalability} further supports the effectiveness and scalability of the cross-polytope LSH.\@

\section{Empirical Evaluation}
\label{sec:experiments}
In this section, we empirically compare \method against an array of competitive baselines on three large-scale, real-world datasets.

\subsection{Experiment Setup}

\myparagraph{Datasets}
We consider three real-world datasets: \Movie, \Music, and \Grocery, crawled and sampled from various public websites (see Table~\ref{tab:datasets}).
\Music is from Amazon and Wikipedia, \Movie from IMDb and WikiData, and \Grocery from Amazon and ShopFoodEx (an online grocery store).
The three datasets represent three distinct dataset types in the entity matching problem:
\begin{enumerate}
  \item \emph{Clean} (\Movie):  Attributes are properly aligned and their values are relatively clean, i.e., each attribute rarely contains irrelevant information.
  \item \emph{Dirty} (\Music):  Certain attributes such as Title may contain significant amount of irrelevant information. In addition, attributes are imperfectly aligned and thus attribute values may be misplaced at the wrong attributes (see Composer and Song Writer in Table~\ref{tab:noisy-pair}).
  \item \emph{Unstructured} (\Grocery): Records are unstructured; that is, all information is mixed into a raw, relatively long, textual attribute.
\end{enumerate}

\begin{table}[htbp]
  \reducemargin
  \caption{Three real-world datasets for our experiments}
  \resizebox{\columnwidth}{!}{\begin{tabular}{llrrrr}
	\toprule
	\textbf{Dataset} & \textbf{Type} & \textbf{Table A} & \textbf{Table B} & \textbf{\# Pos.} & \textbf{\# Attr.} \\ \midrule
	\Movie            & Structured    &     465,893      &     202,162      &     135,275      &        8         \\
	\Music            & Dirty         &    2,190,080     &     105,446      &      2,298       &         7         \\
	\Grocery          & Unstructured       &    1,292,848     &      5,886       &      4,437       &         1         \\ \bottomrule
\end{tabular}}
  \label{tab:datasets}
\end{table}

\myparagraph{Positive Label Generation}
For all three datasets we generate positive labels using the available strong keys. They are tconst (an alphanumeric unique identifier) for \Movie, ASIN (Amazon Standard Identification Number) for \Music, and  UPC code for \Grocery.
Note that not every record in the three datasets has such a strong key.
As these strong keys are used for constructing the positive labels for both training and evaluation, we exclude them from the attribute set to avoid overfitting.
\footnote{In practice, one can always add those pairs matched on these strong keys to the candidate blocking pairs.}

\myparagraph{Training/Test Set Split}
First, we randomly divide the positive labels into two parts, $80\%$ for training and $20\%$ for testing, and ensure that tuples sharing the same strong keys are put into the same part.
Then we create a training set that includes all tuples appearing in the training labels. We also add to the training set $20\%$ of the tuples that do not appear in any positive labels; they serve as irrelevant tuples to facilitate the training.
A test set is created in a similar manner, which includes all the tuples that appear in the test labels, as well as all remaining tuples.
We repeat this procedure and create five training/test set pairs for each dataset.

\myparagraph{Methods for Comparison}
We compare our method to a wide range of competitive baselines:

\begin{itemize}
  \item Key-based blocking: Two blocking choices are considered, i.e., single key (only Title is used as blocking key), disjunctive key (all attributes are used as blocking keys).


  \item MinHash blocking: This method retrieves all pairs whose Jaccard similarities on a particular attribute are above $\theta$ as candidate pairs. All attributes are considered, and $\theta$ is set to be $0.4$, $0.6$, or $0.8$.

  \item DeepER~\cite{Ebraheem2018}: As a state-of-the-art, DL-based method for blocking, this method can be viewed as a special case of \method by setting all attribute encoders to unweighted averaging and letting each attribute be a signature.

\end{itemize}

To understand the impact of the components of \method, we also include two sub-model baselines:
\begin{itemize}
  \item Unweighted averaging encoder: It takes Title as the only signature and applies unweighted averaging to Title.

  \item Attentional encoder: It also takes Title as the only signature but applies the attentional encoder to Title.
\end{itemize}

\myparagraph{\method Configuration}
We use the pretrained fastText model with the word embedding size $d = 300$.
We choose the $\SeqEnc(\cdot)$ module in our attentional encoder to be a single-layer Bi-LSTM with $64$ hidden units, although we find our method is  robust to the choice of neural network architecture and other factors affecting optimization such as batch size and initial learning rate.
For each positive pair, we randomly sample  $|U_{i, i'}| \equiv 10$ irrelevant tuples on the fly during training to construct the loss defined in~\eqref{eq:selection-prob}.
We set the maximum number of signatures $S$ to be the number of attributes and let the Algorith~\ref{alg:sequential-learning} to determine the appropriate $S$.
We set the attention smoothing parameter $\rho$ to $1$ for attribute Title and to $0$ for other attributes.
Finally, for NN search we set the similarity threshold $\theta = 0.8$ and  limit the maximum number of retrieved NNs for each tuple to $\max(1000, \lfloor \sqrt{n_1} \rfloor)$, where $n_1$ is the size of the larger table.

\myparagraph{Minimum Preprocessing}
For all methods, we only perform the same, minimum preprocessing to the datasets, as one of goals is to minimize human effort in blocking.
In fact, the only preprocessing we use is to convert English letters to lowercase and tokenize attributes into token sequences using the standard TreeBank tokenizer~\cite{Loper2009}.
All punctuation, stop-words, non-English characters, typos, abbreviations, and so forth are kept as they are.

\myparagraph{Evaluation Metrics}
We evaluate the effectiveness of each blocking method with two metrics--- recall and P/E ratio. Let $\cT \subseteq [n] \times [n]$ be the unknown set of all true matched pairs, and recall that $X$ is the tuple set and $\cC$ is the set of candidate pairs. The two metrics are defined as follows:
\begin{align*}
    recall & \defas \frac{|\cC \cap \cT|}{|\cT|}, \\
    P/E~ratio & \defas \frac{|\cC|}{|\X|} = \frac{|\cC|}{n}.
\end{align*}
The true label set $\cT$, however, is never known beforehand; we therefore approximate it with the collected positive labels $\cL$.
We report the average performances on the five test sets for each dataset.

\myparagraph{Additional Setups}
Due to the limit of space, we include additional experiment setup such as the implemention notes for our method and other baselines in the Appendix.

\subsection{Effectiveness}
\label{subsec:effectiveness}

\begin{table*}[ht]
  \centering
  \caption{Performance comparison of different methods on different datasets. The best and the second best recalls on each dataset are emboldened and italicized, respectively.  \method achieves the highest recall on dirty (\Music) and unstructured (\Grocery) datasets,  and a close second recall with a magnitude smaller P/E ratio on clean dataset (\Movie).}
  \setlength\doublerulesep{1pt}
\begin{tabular}{lrrrrrr}
	\toprule
	                                                                         & \multicolumn{2}{c}{\textbf{Movie}} & \multicolumn{2}{c}{\textbf{Music}} & \multicolumn{2}{c}{\textbf{Grocery}} \\
\cmidrule(lr){2-3}
\cmidrule(lr){4-5}
\cmidrule(lr){6-7}
\textbf{Method} &        Recall &          P/E ratio &        Recall &          P/E ratio &        Recall &            P/E ratio \\ \midrule
	Single Key (Title)                                                       &          58.6 &               0.10 &          72.3 &               0.40 &           0.0 &                 0.01 \\
	Disjunctive Key (All)                                                    &          85.5 &               9.13 &          92.5 &              39.11 &           0.0 &                 0.01
	\\ \midrule
	MinHash (All, $\theta=0.8$)                                              &          86.1 &               9.24 &          89.5 &              38.52 &           0.8 &                 0.00 \\
	MinHash (All, $\theta=0.6$)                                              &          87.1 &              29.90 &          91.3 &              41.16 &          18.7 &                 0.84 \\
	MinHash (All, $\theta=0.4$)                                              & \textbf{91.4} &            2089.60 & \textit{96.2} &             232.98 &          52.5 &                 4.32 \\ \midrule
	DeepER (All, $\theta=0.8$)                                               &          90.7 &             107.52 &          92.9 &              41.62 & \textit{70.8} &                 1.02 \\ \midrule \midrule
	Un. Avg. Enc. (Title, $\theta=0.8$)                                      &          62.5 &               7.65 &          77.5 &              11.62 &          70.8 &                 1.02 \\
	Atten. Enc. (Title, $\theta=0.8$)                                        &          64.0 &               4.74 &          94.4 &              21.00 &          89.1 &                 0.72 \\
  \bottomrule
	\method (All, $\theta=0.8$)                                              & \textit{90.8} &             105.06 & \textbf{96.3} &              48.80 & \textbf{89.1} &                 0.72 \\ \bottomrule
\end{tabular}

  \label{tab:recall_pe}
\end{table*}

We begin with investigating the effectiveness of \method. Table~\ref{tab:recall_pe} summarizes the recall and P/E ratios of \method and other baseline methods on the three datasets. The results in the table support the following conclusions.

First, \method performs best overall, especially when datasets are dirty and/or unstructured.
On \Grocery, \method not only surpasses all baselines in recall by a substantial margin ($18.3$ percentage points, or $25.8\%$) but also attains the smallest P/E ratio.
On \Music, \method has higher recall than the leading baseline (MinHash with $\theta=0.4$), and its P/E ratio is only $1/5$ of the MinHash's P/E ratio.
On \Movie, \method achieves a close second recall, but its P/E ratio is about $20$ times smaller than the best baseline (MinHash with $\theta = 0.4$).

Second, key-based blocking fails to attain high recall on all three datasets.
This disadvantage is most evident on \Grocery, where not a single positive pair exactly matches on Title, because the two data sources (i.e., Amazon and ShopFoodEx) differ in their ways of concatenating different aspects of a grocery product (e.g., brand name, package size, and flavor) into a single attribute.
As a result, key-based methods are unable to retrieve any true positive pairs as candidate pairs (and thus have a recall of $0.0$).

Third, MinHash requires a low similarity threshold to achieve high recall but at a cost of unaffordably large P/E ratio.
In fact, only when $\theta=0.4$ MinHash achieves comparable recalls to \method on \Movie and \Music, but its P/E is substantially larger than \method's; when $\theta$ increases, MinHash's recall drops significantly.
This sensitivity to $\theta$ thus demands considerable amount of tuning in practice to achieve a balance between recall and P/E ratio.
Moreover, MinHash's recall is still much lower than \method on \Grocery even when $\theta=0.4$, suggesting that Jaccard similarity, which leverages only lexical evidence, is less effective than the similarity metric learned by \method on this challenging domain.

Fourth, the attention mechanism contributes significantly to \method's recall gain. This is best seen from the comparison between attentional encoder and unweighted averaging encoder, where the former outperforms the latter on \Music and \Grocery by $16.9$ and $18.3$ percentage points, respectively.
In addition, \method outperforms DeepER in recall on all three datasets, which further demonstrates the benefits of the attention mechanism.

Last but not least, learning multiple signatures further boosts the \method's recall. This is shown by the recall gain of \method over attentional encoder on \Movie and \Music.\footnote{Since \Grocery only has one attribute, \method is the same as attentional encoder and thus both methods share identical performance on this dataset. This convergence also happens to DeepER and the unweighted average encoder. }

\subsection{Automation}

We now explain how \method saves manual work but still obtains high recall.

One major source of the original manual work is concerned with how to  iteratively try out different combinations of cleaning and blocking key customization strategies.
This task is now alleviated by \method's ability to assign different weights to tokens through attentional encoders.
Figure~\ref{fig:attention-weights} visualizes the attention weights for the titles of sampled positive pairs on \Music (Figure~\ref{fig:attention-ex-1}) and \Grocery (Figure~\ref{fig:attention-ex-2}~\ref{fig:attention-ex-3}).
Several patterns stand out:

\begin{enumerate}
  \item The tokens at starting positions tend to enjoy large weights. This is consistent to our observation that positive pairs typically match on the first few tokens; these tokens may also encode important information such as brand in \Grocery (e.g., ``bertolli'' in Figure~\ref{fig:attention-ex-2} and ``la choy'' in Figure~\ref{fig:attention-ex-3}).

  \item Common stop words (e.g., ``the'' and ``and'' in Figure~\ref{fig:attention-ex-1}) and uninformative punctuation (e.g., most commas and periods) are properly ignored. This is also expected since these tokens are often irregularly injected into tuples and result in avoidable mismatches.

  \item The tokens in special positional relationship to  ``functional'' punctuation marks (e.g., parenthesis) tend to get special attention. For example, the  ``digitally'' and ``remastered'' in Figure~\ref{fig:attention-ex-1} are surrounded by parenthesis and get small weights.

  \item Many discriminative tokens (e.g., those regarding the package size, and flavor in \Grocery) are ignored. We were initially surprised at this because these tokens are usually useful in the downstream matching step. Later we realize that \method's choice may be reasonable because these tokens are often randomly missing or expressed in different forms; ignoring them in the blocking step avoids missing positive pairs.

\end{enumerate}

\begin{figure*}[!htbp]
  \reducemargin
  \centering
  \subfloat[]{
    \raisebox{0.02in}{
      \includegraphics[width=0.30\textwidth]{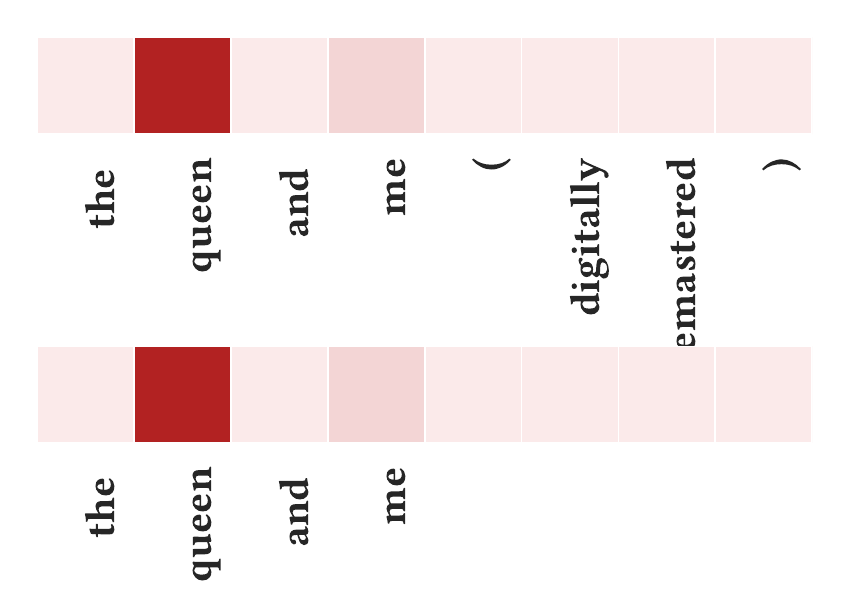}}
    \label{fig:attention-ex-1}
  }
  \subfloat[]{
    \includegraphics[width=0.30\textwidth]{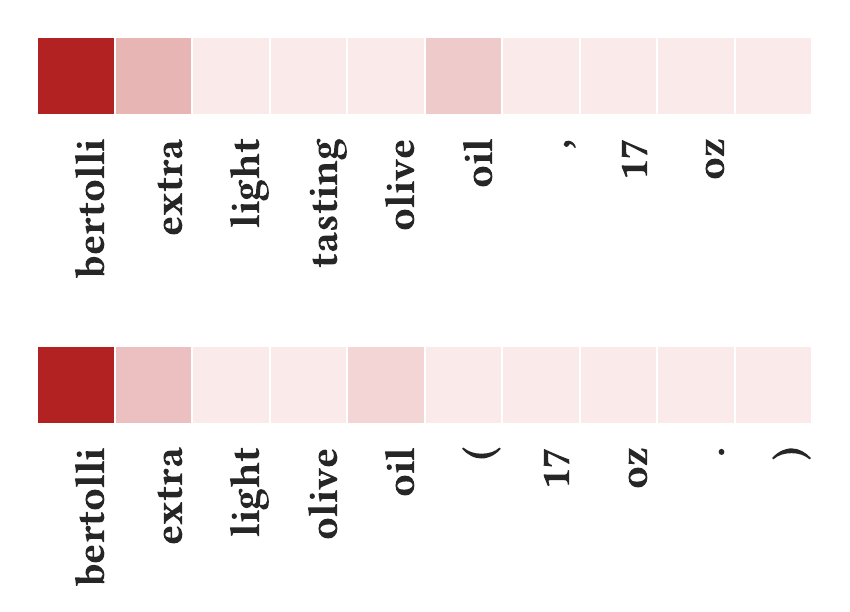}
    \label{fig:attention-ex-2}
  }
  \subfloat[]{
    \includegraphics[width=0.30\textwidth]{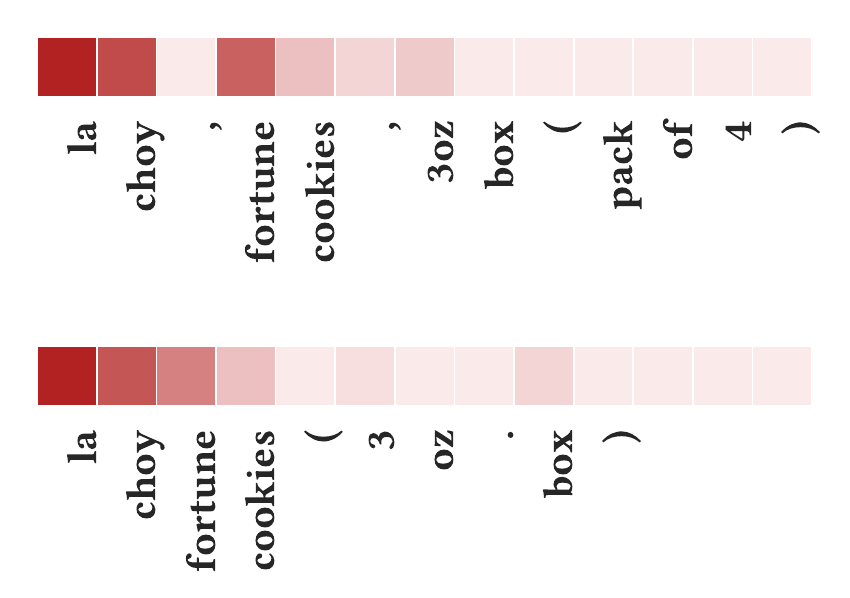}
    \includegraphics[width=0.04\textwidth]{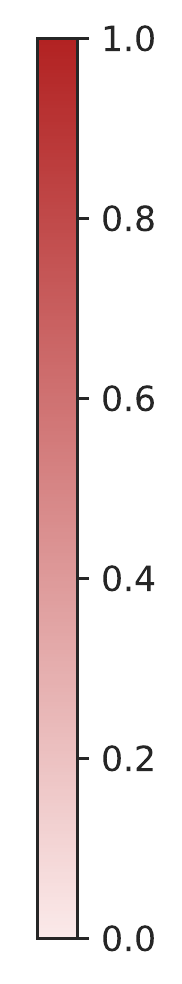}
    \label{fig:attention-ex-3}
  }
  \caption{Example attention weights on (a) \Music and (b, c) \Grocery. Important tokens are assigned to higher weights, representing by darker color.}
  \label{fig:attention-weights}
\end{figure*}

Manual work is also saved by \method's ability to automatically combine different attributes to generate signatures.
Figure~\ref{fig:signature-weights} shows the learned signature weights for \Music.
The combination of Ablums and Performers in $sig_2$ is likely due to the fact that they are often matched or unmatched simultaneously, and therefore combining them would reduce the chance of collision and reduce the P/E ratio.
The selection of Composer, Lyricist, and SongWriter by $sig_3$ allows an approximate cross-attribute matching among these three attributes, which is useful to handle the attribute misplacement cases in this dataset.

\begin{figure}[!htbp]
  \reducemargin
  \centering
  \includegraphics[width=0.8 \columnwidth]{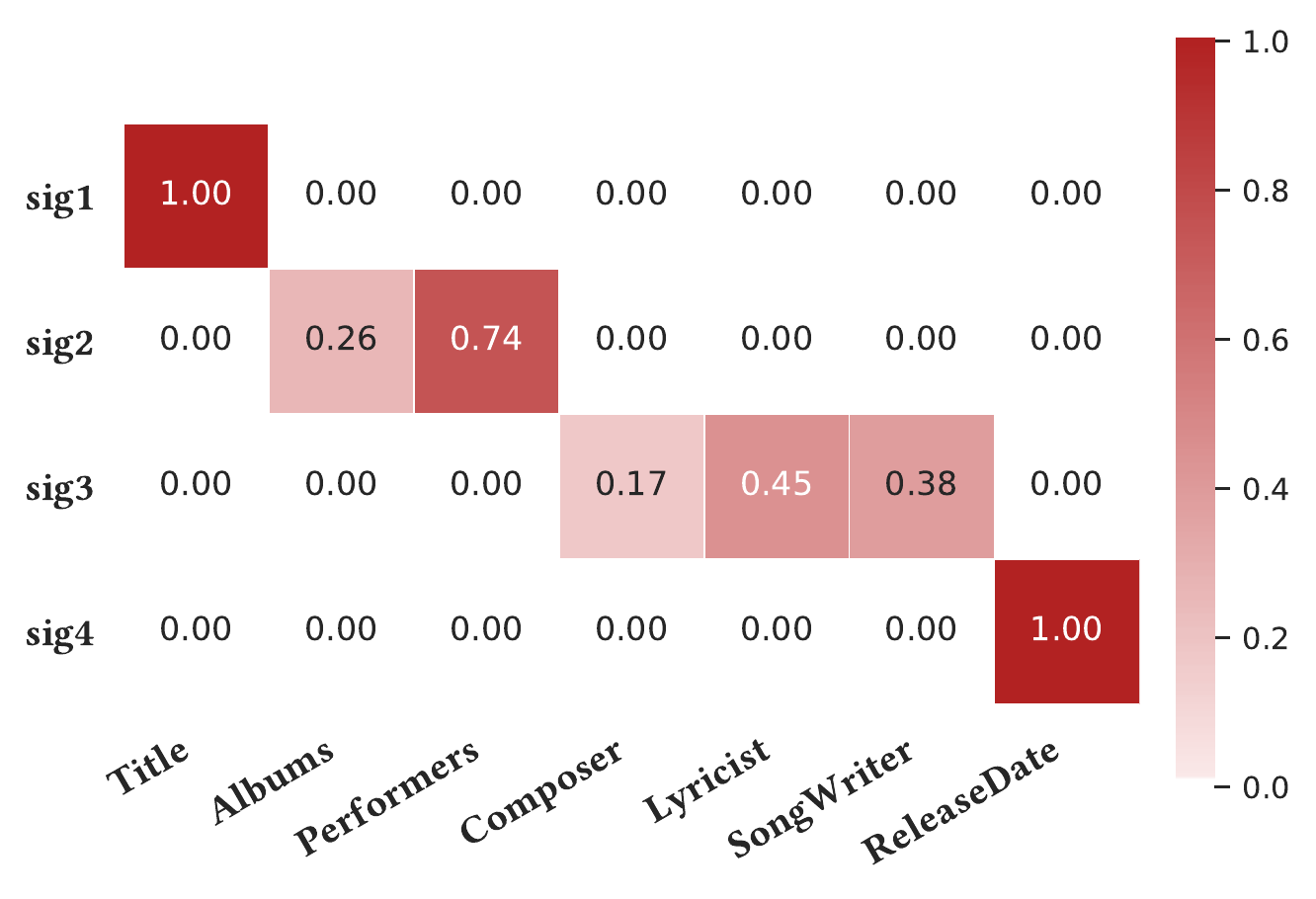}
  \caption{The learned signature weights on \Music. }
  \label{fig:signature-weights}
  \reducemargin
\end{figure}

\subsection{Scalability}

\label{subsec:scalability}

Finally, we investigate the empirical performance of cross-polytope LSH.\@
Figure~\ref{fig:avg-query-time-speedup} shows how much speedup of average query time (measured with single CPU core)  cross-polytope LSH can achieve over brute force with different the number of points in the retrieval set.
Huge speedup of cross-polytope over brute force is observed: when the number of points reaches $10^6$, the former is about 40--80 times faster than the latter.
In addition, the speedup improves sub-linearly as the number of points increase, which is expected since brute force has a strictly $\cO(n)$ query complexity, and so the speedup should scale $\cO(n^{1 - \rho})$ where $\rho < 1$.



\begin{figure}[!htbp]
  \reducemargin
  \centering
  \includegraphics[width=0.7\columnwidth]{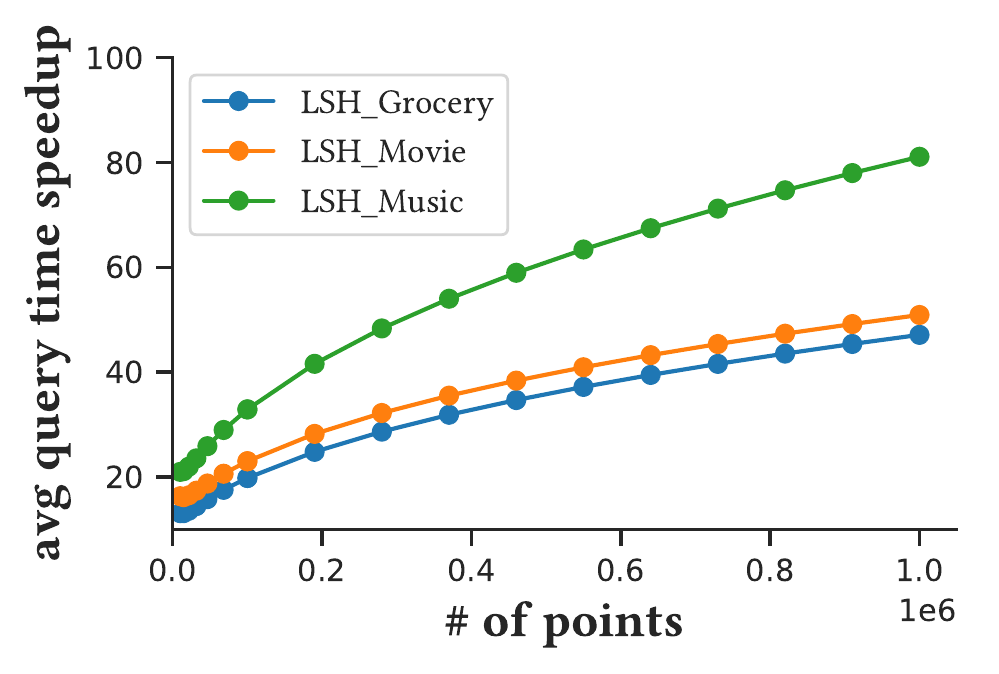}

  \caption{The average query time speedups achieved by cross-polytope LSH over brute force on different datasets. LSH is substantially faster than brute force, and the speedup improves as the number of points increases.}
  \label{fig:avg-query-time-speedup}
  \reducemargin

\end{figure}

Theorem~\ref{thm:lsh-query-complexity} suggests a potential recall loss of cross-polytope LSH over exact NN search, as the algorithm may not retrieve any neighboring points with a small probability, or the retrieved points are all below the specified similarity threshold $\theta$. We therefore investigate how much recall loss this step of NN search using LSH can incur.
We experiment on \Grocery data, because its Table B has many fewer tuples than the Table A, which allows us to conduct exact NN search by brute force.
Table~\ref{tab:recall-gap-lsh} shows that under various similarity thresholds, although there is a recall gap between the two methods, the gap is very small---the largest gap (when $\theta=0.8$) is only $1.5\%$. We believe this is acceptable given the huge efficiency improvement of LSH upon brute force.

\begin{table}[!tbp]
  \centering
  \caption{Comparison of recall between brute force NN search and LSH-based NN search on \Grocery. Only a minor recall gap of $0.89\%$ on average is observed.}

  \begin{tabular}{lrrrr}
\toprule 
& \multicolumn{4}{c}{\textbf{Threshold $\theta$}}\\ 
\cmidrule(lr){2-5}
    \textbf{Method} & 0.95 & 0.90 & 0.85 & 0.80  \\
\midrule 
    Brute force NN & 81.4 & 85.8 & 88.9 & 90.4 \\
    LSH-based NN          & 81.1 & 85.2 & 88.0 & 89.1 \\
\bottomrule     
\end{tabular}
  \label{tab:recall-gap-lsh}
  \reducemargin

\end{table}



\section{Related Work}
\label{sec:related-work}
As a critical step of entity matching, blocking has been extensively studied over the last several decades with numerous methods being proposed.
For a comprehensive comparison of existing blocking methods, see~\cite{Papadakis2016}.
Among others, key-based blocking methods~\cite{Aizawa2005,Gravano2003,McNeill2012,Papadakis2013} are mostly used in practice yet require a lot of human effort.
MinHash blocking~\cite{Liang2014} allows a fuzzy match on attributes, but often ends up with unaffordably many candidate pairs.
The so-called ``meta-blocking''~\cite{Papadakis2014a,Papadakis2014,Papadakis2016a,Simonini2016} tries to reduce the P/E ratio by introducing---between blocking and matching---extra steps to prune the candidate pairs; their contributions are orthogonal to our work.
To the best of our knowledge, the recently proposed DeepER~\cite{Ebraheem2018} is the most relevant work to ours and can be viewed as a special case of \method.

Locality-sensitive hashing is firstly proposed in the seminal work~\cite{Indyk1998} for $\ell_p$ norm and later extended to other distance or similarity metrics. For cosine similarity, representative LSH schemes include hyperplane LSH~\cite{Charikar2002}, spherical LSH~\cite{Andoni2015}, and cross-polytope LSH~\cite{Andoni2015a}, among others.
While spherical LSH and cross-polytope LSH both attain the theoretically optimal query time complexity, only the latter can be efficiently implemented, as spherical LSH relies on rather complex hash functions that are very time-costly to evaluate.

\section{Conclusion}
\label{sec:conclusion}
We have proposed \method, a hands-off blocking framework for entity matching on tabular records, based on similarity-preserving representation learning and nearest neighbor search. 
Our contributions include: 
(a) \textbf{Automation}: \method frees users from tedious and laborious data cleaning and blocking key tuning. 
(b) \textbf{Scalability}: \method has a sub-quadratic total time complexity and can be easily deployed for millions of records. 
(c) \textbf{Effectiveness}: \method achieves superior performance on multiple large-scale, real-world datasets of various domains.
One future direction would be to extend \method to datasets with non-textual attributes (e.g., image, audio, and video).

\myparagraph{Acknowledgments}
We thank Xian Li, Tong Zhao, and Dongxu Zhang for the valuable discussions during the development of this work. We also thank the anonymous reviewers for their helpful feedback.

\bibliographystyle{ACM-Reference-Format}
\bibliography{reference}

\clearpage

\appendix





\section{Additional Experiment Details}
\label{sec:appendix}

\subsection{Dataset}
The available attributes for each dataset are listed as follows:
\begin{itemize}
\item \Movie: Title, Description, Genres, Director, MusicComposer, Playwright, Characters, and Actors.
\item \Music: Title, Albums, Performers, Composer, Lyricist, SongWriter, and ReleaseDate.
\item \Grocery: Title.
\end{itemize}
For non-textual attributes (e.g., dates), we convert them into their text representations; for set-valued attributes (e.g., Actors), we concatenate the textual representation of all their elements.

\subsection{Implementation Details for \method}
We implement our model using \texttt{PyTorch}.  Different signatures share the same set of attribute encoders, but different attribute encoders have their own parameters.

In evaluating~\eqref{eq:selection-prob} it is possible that the signature $f^{(s)}$ may not be applicable to some tuples; that is, these tuples have missing values on all the attributes that correspond to the positive weights of $f^{(s)}$. If it happens to the sampled irrelevant tuples, i.e., for some $j \in U_{i, i'}$, we exclude $j$ from $U_{i, i'}$. But if the non-applicable tuple is either $\x_i$ or $\x_{i'}$, then we remove the positive pair $(i, i')$ from the mini-batch $\cL_B$. This encourages the learned signature functions to maximize the similarity gap between positive pairs and irrelevant pairs rather than to maximize the coverage of the signature functions.

The step of NN search with cross-polytope LSH is implemented with package \texttt{FALCONN}\footnote{\url{https://falconn-lib.org/}}. We build hash tables with the package's default configuration for $n=10^6$ points and dimension $p=300$, i.e., there are $K=10$ hash tables, and each table consists of $B=2$ hash function.
We multi-probe one additional bucket per table; that is, for each table, not only the points in the query's sitting bucket but also the points in the bucket that is closest to the query's sitting bucket are retrieved.

\subsection{Implementation Details for Baselines}
Our implementation of MinHash is from package \texttt{datasketch}.\footnote{\url{https://github.com/ekzhu/datasketch}}  There are $K=32$ hash tables and each table consists of $B=4$ hash functions.

\subsection{Platform and Total Runtime}
All experiments were conducted on a server with a 16-core CPU at 2.00Ghz and 128G memory. The total runtime for \method, from computing signatures to outputing final candidate pairs, is less than 0.5 hour for each dataset.

\section{Proof of Theorem~\ref{thm:lsh-query-complexity}}
\label{ap:proof-for-lsh-theorem}

\begin{proof}
Let $\cS^{p-1}$ be the unit sphere in $\bbR^p$, i.e., $\cS^{p - 1} \defas \{\x \in \bbR^p |\,\Vert x \Vert_2 = 1 \}$. Since cosine similarity is scale-invariant, we can project points onto $\cS^{p-1}$ without changing the cosine similarity among them. Hence, we can assume that $\X \subset \cS^{p-1}$ without loss of generality.

The Corollary 1 in~\cite{Andoni2015a}, together with Theorem 3.4 in~\cite{Har-Peled2012}, establishes that given an $n$-point dataset $\X \subset \cS^{p-1}$, there exists an algorithm based on hash tables built with cross-polytope LSH satisfying: for any query $\x$, Euclidean distance threshold $r > 0$, and approximation factor $c > 1$,
if there exists a point $\x^* \in \X$ such that $ \Vert x - x^* \Vert_2 < r$, the algorithm will with success probability at least $1 - \varepsilon$  retrieve a point $\x' \in \X$ with $\Vert x -  x' \Vert_2 < cr$ in query time $\cO(d \cdot n^{\rho})$, where $\varepsilon < \frac{1}{3} + \frac{1}{e} $ and
\begin{equation}
\rho = \frac{1}{c^2} \cdot \frac{4 - c^2 r^2}{4 - r^2}+ o(1).
\label{eq:rho_Euclidean}
\end{equation}
Thus, when $K$ independent copies of such hash tables are built, the success probability improves to at least $1 - \varepsilon^ K$, yet the query time complexity also increases to $\cO(K \cdot d \cdot n^{\rho})$.

Note that there is a one-to-one mapping between the cosine similarity and Euclidean distance for points on $\cS^{p - 1}$, i.e., $\cos(x, x') =  1 - \frac{1}{2} \Vert x - x' \Vert_2^2$. Plug $r = \sqrt{2 - 2 \theta}$ and $c = \sqrt{(1 - \theta') / (1 - \theta)}$ in~\eqref{eq:rho_Euclidean}, we get the result in Theorem~\ref{thm:lsh-query-complexity}.

\end{proof}


\end{document}